\newcommand{\bigO}[1]{\mbox{\mbox{O}$\paren{#1}$}\xspace}
\newcommand{\bigOmega}[1]{\mbox{\mbox{$\Omega$}$\paren{#1}$}\xspace}
\newcommand{\paren}[1]{\left(  #1 \right)}
\newcommand{\edge}[1]{(  #1 )}
\newcommand{\IN}[1]{\mbox{\mbox{IN}$[#1]$}\xspace}
\newcommand{\OUT}[1]{\mbox{\mbox{OUT}$[#1]$}\xspace}
\newcommand{\MAKESET}[1]{\mbox{\mbox{\textbf{makeset}}$(#1)$}\xspace}
\newcommand{\FIND}[1]{\mbox{\mbox{\textbf{find}}$(#1)$}\xspace}
\newcommand{\LINK}[1]{\mbox{\mbox{\textbf{link}}$(#1)$}\xspace}
\algnewcommand\algorithmicswitch{\textbf{switch}}
\algnewcommand\algorithmiccase{\textbf{case}}
\algnewcommand\algorithmicassert{\texttt{assert}}
\algnewcommand\Assert[1]{\State \algorithmicassert(#1)}%
\newtheorem{thm}{Theorem}
\newtheorem{claim}{Claim}
\newcounter{saveenum}
\begin{document}

\title{A Simple and Efficient Algorithm for Finding Minimum Spanning Tree Replacement Edges}

\author{\IEEEauthorblockN{David A. Bader}
\IEEEauthorblockA{Department of Data Science\\New Jersey Institute of Technology\\Newark, NJ 07102\\ Email: bader@njit.edu}
\and
\IEEEauthorblockN{Paul Burkhardt}
\IEEEauthorblockA{Research Directorate\\National Security Agency\\ Fort Meade, MD 20755\\ Email: pburkha@nsa.gov}
}

\maketitle

\pagestyle{plain}

\maketitle

\begin{abstract}
Given an undirected, weighted graph, the minimum spanning tree (MST)
is a tree that connects all of the vertices of the graph with minimum
sum of edge weights.  In real world applications, network designers
often seek to quickly find a replacement edge for each edge in the
MST. For example, when a traffic accident closes a road in a
transportation network, or a line goes down in a communication
network, the replacement edge may reconnect the MST at lowest cost.
In the paper, we consider the case of finding the lowest cost
replacement edge for each edge of the MST.  A previous algorithm by
Tarjan takes \bigO{m \alpha(m, n)} time and space, where $\alpha(m, n)$ is the
inverse Ackermann's function.  Given the MST and sorted non-tree
edges, our algorithm is the first practical algorithm that runs in \bigO{m+n} time and
\bigO{m+n} space to find all replacement edges. 
Additionally, since the most vital edge is the tree edge whose removal
causes the highest cost, our algorithm finds it in linear time. 
\end{abstract}

\section{Introduction}
\label{s:intro}

Let $G=(V,E)$ be an undirected, weighted graph on $n=|V|$ vertices and
$m=|E|$ edges, with weight function $w(e)$ for each edge $e \in E$. A
minimum spanning tree $T = \mbox{MST}(G)$ is a subset of $n-1$ edges
with the minimal sum of weights that connects the $n$ vertices.

In real world applications the edges of the MST often represent
roadways, transmission lines, and communication channels. When an edge
deteriorates, for example, a traffic accident shuts a road or a link
goes down, we wish to quickly find its \emph{replacement edge} to
maintain the MST. The replacement edge is the lightest weight edge
that reconnects the MST.  For example, Cattaneo et al.~\cite{CFF10}
maintain a minimum spanning tree for the graph of the Internet
Autonomous Systems using dynamic graphs. Edges may be inserted or
deleted, and a deletion of an MST edge triggers an expensive operation
to find a replacement edge of lightest weight that reconnects the MST
in \bigO{m \log n} time from the non-tree edges, or \bigO{m + n \log
  n} time when a cache is used to store partial results from previous
delete operations.

In this paper, we consider the problem of efficiently finding the
minimum cost replacement for all edges in the MST. Recomputing the MST
for each of the original tree edges is clearly too costly. The problem
is deceptively difficult. Each replacement edge must be a non-MST edge
in a fundamental cycle with the obsolete MST edge. But there are
\bigO{m} unique cycles and each cycle can have \bigO{n} MST edges so
choosing the lightest non-tree edges as replacements requires careful
planning to prevent repeatedly referencing the same MST edges. This
and related problems for updating the MST have been studied
extensively (e.g.,
\cite{SP75,CH78,Tar79,Fre85,KC96,HK97b,HDT01}).  The best algorithm is
from 1979 due to Tarjan \cite{Tar79} and runs in \bigO{m \alpha(m,n)} time and space,
where $\alpha(n,m)$ is the inverse Ackermann's function. 
But given edges sorted by weight, we show the problem can be solved in linear time and 
space using a surprisingly simple approach.

The main result of this paper is a simple and fast deterministic algorithm for the
MST replacement edge problem. Given the minimum spanning tree and
non-tree edges sorted by weight, our algorithm finds all replacement
edges in \bigO{m+n} time and \bigO{m+n} space. Although it is known that these bounds are theoretically possible using pre-sorted edges, we give the first practical algorithm. Sorted edges come free
if the MST is computed by Kruskal's algorithm. If the edge weights
have fixed maximum value or bit width, then the edges can be sorted in
linear time making our algorithm an asymptotic improvement over prior
algorithms. Our algorithm is simple and does not require queues or computing the Lowest Common Ancestor (LCA). 
It assigns a pair of numbers to each vertex denoting the order in which they are first and last visited in a depth-first search (DFS) over the MST, and for each replacement it applies linear-time path compression using the static union tree of Gabow and Tarjan \cite{GT83,GT85}.
All running times in this paper are deterministic worst-case under the Word-RAM model.

\section{Related Work}

The problem of updating the MST should a tree edge be deleted or its cost increased has been studied since the 1970s. In 1975 Spira and Pan~\cite{SP75} showed it takes \bigO{n^2} time to find the replacement for one tree edge. Then in 1978 Chin and Houck~\cite{CH78} gave a \bigO{n^2} time algorithm to find the replacements for all MST edges. This result was improved to \bigO{m\alpha(m, n)} time in 1979 by Tarjan~\cite{Tar79} and remains the best deterministic runtime for the general case of unordered, arbitrary-weight edges. The algorithm due to Tarjan uses path compression that maintains balanced tree height, which does not benefit from having edges sorted by weight. It finds the replacement edges by evaluating paths in a specially constructed directed acyclic graph in which vertices correspond to tree and nontree edges, and combining operations on vertex labels. This directed acyclic graph has \bigO{m\alpha(m, n)} vertices and therefore Tarjan's algorithm for the MST replacement edge problem takes \bigO{m\alpha(m, n)} time and space. Our method is more efficient for the case in which edges are pre-sorted by weight, taking \bigO{m+n} time and space.
 
The MST replacement edge problem can also be solved by MST sensitivity analysis, for which Tarjan~\cite{Tar82} gave a \bigO{m\alpha(m, n)} time algorithm in 1982. The MST sensitivity analysis determines the amount each edge weight can be perturbed without invalidating the MST. The sensitivity of a tree edge $e$ is the weight of an edge $f$ that is the minimum weight among the nontree edges that cross the cut induced by removing $e$, and therefore $f$ is the replacement for $e$. In 1994 Booth and Westbrook~\cite{BW94} show for planar graphs that the MST sensitivity analysis and replacement edge problems can be solved in \bigO{n} time and space. They use a depth-first search ordering of vertices similar to our method, but explicitly compute the LCA for each edge. Their method maintains two LCA-ordered lists for each leaf vertex of the tree, and therefore sorting by edge weights does not improve their runtime. 
In 1996 Kooshesh and Crawford~\cite{KC96} proposed an algorithm for the MST replacement edge problem taking \bigO{\max(C_{\mbox{mst}}, n\log n)} time, where $C_{\mbox{mst}}$ is the cost of computing the minimum spanning tree. But their runtime is not efficient and can take \bigO{n\log n} time. Their approach uses similar ideas to that of~\cite{BW94} and therefore does not improve with sorted edges in advance.

It took nearly thirty years to improve Tarjan's MST sensitivity analysis result for general graphs when in 2005 Pettie~\cite{Pet05, Pet15} improved the runtime to \bigO{m \log \alpha(m, n)} time. A more tantalizing result is Pettie showed that the MST sensitivity analysis problem is no harder than solving the MST. This implies that the MST replacement edge problem can be solved in deterministic, linear-time given an MST algorithm with the same time complexity. But a deterministic, linear-time MST algorithm is still an open problem. Interestingly, Pettie and Ramachandran~\cite{PR02b} gave an optimal MST algorithm. The 2002 Pettie-Ramachadran MST algorithm has the curious property that although it is provably optimal, the runtime is unknown but is between \bigO{m} and \bigO{m \alpha(m, n)} time.\footnote{This is due to the decision tree complexity of the MST; the height of the tree is optimal but unknown.} In the special case where edges are sorted by weight in advance, then the MST can be solved in deterministic linear time using the Fredman and Willard algorithm~\cite{FW94} by transforming real number weights to integers using relative ordering, e.g. the $i^{th}$ ordered edge gets weight $i$. Then it follows from Pettie’s reduction that the MST replacement edge problem can be solved in deterministic linear time, matching the same bounds as our algorithm. The important distinction is our algorithm is far simpler.

A related problem to MST replacement edges is that of maintaining the
MST as edges are repeatedly updated, where an update means deletion,
insertion, or weight change of an edge.  Frederickson \cite{Fre85}
gave an algorithm to maintain an MST with edge updates (deletion,
insertion, or weight change) where each update takes \bigO{\sqrt{m}}
worst-case time, and sparsification makes the bound
\bigO{\sqrt{n}}. Henzinger and King \cite{HK97b} gave an algorithm to
maintain a minimum spanning forest with edge deletions or insertions;
each update takes \bigO{\sqrt[3]{n} \log n} amortized
time. Holm~et~al. \cite{HDT01} give an algorithm for maintaining a
minimum spanning forest with edge deletions or insertions; each update
takes \bigO{\log^4 n} amortized time.
Recently, Hanauer~\emph{et al.} gave a survey of fully dynamic graph algorithms and discuss maintaining minimum spanning trees \cite{HHS22}.

\section{Algorithm}
\label{s:alg}

Given $T$ and the remaining non-tree edges $E \backslash E_T$ sorted
from lowest to highest weight, then Algorithm~\ref{alg} finds all
replacement edges for an MST in \bigO{m+n} time.  Observe that each of
the $m-n+1$ edges in $E \backslash E_T$ induces a fundamental cycle with the
edges in $T$. Then for any MST edge there is a subset of cycles
containing that edge, and the cycle induced by the lightest non-MST
edge is the replacement for it. This follows from the
\emph{Cut Property}~\cite[c.f. Theorem~23.1]{CLRS09}
where the lightest non-tree edge crossing a cut must be
in the MST if some other edge in the induced cycle is removed. Our
Algorithm~\ref{alg} finds the lightest weight cycle for each tree edge
but avoids repeatedly traversing these edges.  Since replacement edges
are found immediately after computing an MST, we can re-use the sorted
edges from Kruskal's~\cite{Kru56} MST algorithm.

\begin{algorithm}[htbp]
\caption{Linear Time MST Replacement Edges}
\label{alg}
\begin{algorithmic}[1]
\Require Graph $G$, MST edges labeled, and sorted list of non-MST edges
\Require Zero-initialized arrays $P$, IN, OUT of size $n$ indexed by all vertices $v \in V$.
\Procedure{PathLabel}{$s$, $t$, $e$} 

\If{$\IN{s}< \IN{t} <\OUT{s}$} \Comment{$s$ is ancestor of $t$} \label{alg:planStart} \label{alg:sAncestor}
\State \Return
\EndIf

\If{$\IN{t}< \IN{s} <\OUT{t}$} \label{alg:plan} \Comment{$t$ is ancestor of $s$}
\State  PLAN $\gets$ ANCESTOR, $k_1 \gets \IN{t}$, $k_2 \gets \IN{s}$

\Else

\If {$\IN{s} < \IN{t}$} 
\State  PLAN $\gets$ LEFT, $k_1 \gets \OUT{s}$, $k_2 \gets \IN{t}$ \label{alg:planLeft} \Comment{$s$ is left of $t$}
\Else
\State  PLAN $\gets$ RIGHT, $k_1 \gets \OUT{t}$, $k_2 \gets \IN{s}$ \Comment{$s$ is right of $t$}
\EndIf

\EndIf \label{alg:planEnd}

\State $v$ $\gets s$ \label{alg:startS}

\While {$k_1 < k_2$} \label{alg:travStart} \Comment{Detecting when below LCA($s$, $t$)}

\If {$\FIND{v} = v$}  \label{alg:findv} \Comment{If true, set replacement edge for $\edge{v,P[v]}$ }

\State $R_{\edge{ v, P[v] }} \gets e$  \label{alg:replace} \Comment{Set the replacement edge}
\State $\LINK{v}$ \label{alg:link} \Comment{Union the disjoint sets of $v$ and $P[v]$}
\EndIf

\State $v \gets \FIND{v}$ \label{alg:findNext}

\Switch {PLAN} \label{alg:plan2Start}
\Case{ANCESTOR}
\State  $k_2 \gets \IN{v}$
\EndCase
\Case {LEFT}
\State  $k_1 \gets \OUT{v}$
\EndCase
\Case {RIGHT} \label{alg:caseRight1}
\State  $k_2 \gets \IN{v}$ \label{alg:caseRight2}
\EndCase
\EndSwitch \label{alg:plan2End}

\EndWhile \label{alg:travEnd}

\EndProcedure

\Statex

\State Root the MST $T$ at arbitrary vertex $v_r$ and store parents in $P$.
\State $P[v_r] \gets v_r$ \Comment{root's parent points to root}
\State \label{alg:dfs} Run DFS on $T$, setting $\IN{v}$ and $\OUT{v}$ to the counter value when $v$ is first and last visited, respectively. 

\ForAll{vertices $v \in V$} \label{alg:makeset1}
\State $\MAKESET{v}$ \label{alg:makeset2} \Comment{Initialize the disjoint sets}
\EndFor

\ForAll{edges $e \in E_T$}
\State $R_e = \emptyset$ \Comment{Initialize the replacement edges}
\EndFor

\For{$k \gets 1 \, .. \, m-n+1$}  \label{alg:scanStart}  \Comment{Scan the $m-n+1$ sorted non-MST edges}

\State $\edge{ v_i, v_j} \gets e_k$

\State \Call{PathLabel}{$v_i, v_j, \edge{ v_i, v_j }$} \label{alg:pathlabel1}
\State \Call{PathLabel}{$v_j, v_i, \edge{ v_i, v_j }$} \label{alg:pathlabel2}
\EndFor \label{alg:scanEnd}

\end{algorithmic}
\end{algorithm}

The major steps of our approach are 1) assign the parent and the first and last visited numbers to each vertex according to depth-first search over $T$ 2) traverse the fundamental cycle induced by each non-tree edge in order of ascending weight 3) compress paths to skip edges already assigned replacements.
With \bigO{m} non-tree edges and \bigO{n} edges in each cycle, the na\"{\i}ve approach has
\bigOmega{mn} time complexity. This paper introduces an algorithm that
reduces the cost to \bigO{m+n} time by a novel use of a special case
of the disjoint set union data structure.  We use the disjoint sets
for fast path compression based on the Gabow-Tarjan static union tree
method \cite{GT83,GT85}.

Algorithm~\ref{alg} first roots the MST at an arbitrary vertex $v_r$
and initializes a parent array $P$.  Next, each vertex $v \in V$ is
visited during a depth-first search traversal from the root, and
the value of $P[v]$ is set to its respective parent vertex from the
traversal order.  For the root $v_r$, its parent $P[v_r]$ is set to
$v_r$.  Our approach uses another innovation that alleviates the need
to find the lowest common ancestor vertex in the rooted MST for each
non-tree edge. To do so, we use a pair of vertex-based values, \IN{v}
and \OUT{v}, which are assigned as follows.  During the depth-first
traversal of the rooted tree, a counter is incremented for each step
in the traversal (up or down edges). When the traversal visits $v$ the
first time during a traversal down an edge, \IN{v} is assigned the
current counter value.  When the traversal backtracks up an edge from
vertex $v$, \OUT{v} is then assigned the current counter value.
With a minor modification to our algorithm we can also employ the conventional pre- and post-order numbers from depth-first search to detect the LCA. 
Our ordering is chosen for convenience because it requires only a single increment on the visit order, which simplifies tracking the traversal as an increasing sequence of visit numbers.
For example, using pre($v$), post($v$) to denote respectively the pre- and post-order of a vertex $v$, then when identifying if $s$ is an ancestor of $t$ we can replace the conjunction pre($s$) $<$ pre($t$) and post($t$) $<$ post($s$) with the straightforward inequality sequence IN[$s$] $<$ IN[$t$] $<$ OUT[$t$] $<$ OUT[$s$].


The $m-n+1$ remaining edges in $E \backslash E_T$ are scanned in
ascending order by weight, inspecting the tree edges in each
corresponding fundamental cycle. In this order, the first time a tree
edge $e$ is included in a fundamental cycle, its replacement $R_{e}$
is set to the non-tree edge from that cycle. As we will describe, the
disjoint sets provide subpath compression as replacement edges
are assigned to MST edges. In Algorithm~\ref{alg}, the disjoint
sets are updated through the \textbf{makeset}, \textbf{find}, and
\textbf{link} functions.

\begin{figure}[htbp]
\centering
\begin{forest}
  for tree={
    calign=fixed edge angles,
    draw,
    circle,
    minimum size=16,
    inner sep=0
    },
  [$v_r$
    [,name=head
      [$\ldots$,draw=none,edge=dashed
        [$z$,edge=dashed
          [
            [$\ldots$,draw=none,edge=dashed
              [$s$,name=tail,edge=dashed
                []
                [$\ldots$,draw=none,edge={draw=none}]
                []
              ]
              [,phantom]
            ]
            [,phantom]
          ]
          [$\ldots$,draw=none,edge=dashed
            [,phantom]
            [$t$,edge=dashed,
              []
              [$\ldots$,draw=none,edge={draw=none}]
              []
            ]
          ]
        ]
        [,phantom]
      ]
      [,phantom]
    ]
    [,phantom]
    [$\ldots$,draw=none,edge=dashed]
  ]
  \coordinate (h) at ([xshift=-1cm] head);
  \coordinate (t) at ([xshift=-1cm] tail);
  \draw[->,very thick] (t) to node[left] {$w$} (h);
\end{forest}

\caption{The PathLabel algorithm detects when vertex $w$ on the path
  from $s$ to the root $v_r$ is an ancestor of the vertex
  $z=\mbox{LCA}[s,t]$, without determining $z$.}\label{fig:PathLabel}
\end{figure}
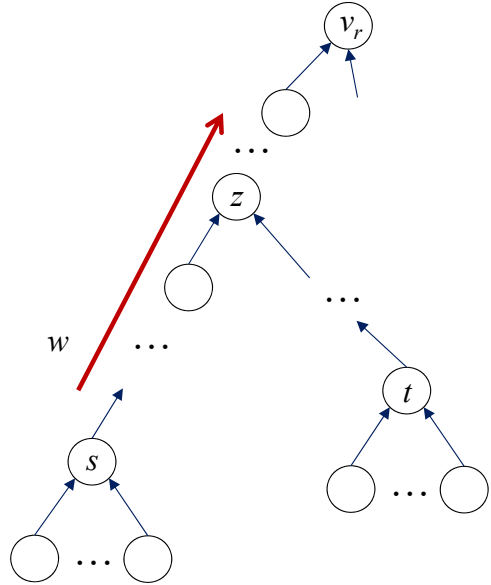

For each non-tree edge $\edge{s, t}$, if vertex $t$ is a descendant
of $s$, (if and only if $\IN{s} < \IN{t} < \OUT{t} < \OUT{s}$), we
make a single \textsc{PathLabel} call for the edges from $t$ up to
$s$.  Since $\IN{t} < \OUT{t}$, we simplify this check in
Algorithm~\ref{alg}, line~\ref{alg:sAncestor}, to $\IN{s} < \IN{t} <
\OUT{s}$.

Otherwise, two calls are made to \textsc{PathLabel}, corresponding to
inspecting the \emph{left} and \emph{right} paths of the cycle from
$s$ and $t$, respectively, that would meet at the LCA of $s$ and $t$
in the tree.  We assume, without loss of generality, that $s$ is
visited in the depth-first search traversal before $t$.  Let's call $z
= \mbox{LCA}[s, t]$.  It is useful to use $z$ in describing the
approach, yet we never actually need to find the LCA $z$.  We know
$\IN{z} < \IN{s} < \OUT{s} < \IN{t} < \OUT{t} < \OUT{z}$ by definition
of the depth-first traversal.  As illustrated in
Figure~\ref{fig:PathLabel}, consider a vertex $w$ that lies on the
path from $s$ to the root $v_r$. Vertex $w$ must either lie on the
path from $s$ to the LCA $z$ (where $\OUT{w} < \IN{t}$), or from $z$
to the root $v_r$ (where $\OUT{w} > \IN{t}$). We use this fact to
detect when \textsc{PathLabel} reaches the LCA without computing it.

As mentioned earlier, the disjoint sets provide subpath compression as
replacement edges are assigned to MST edges. Initially, each vertex is
placed in its own set.  While traversing edges in a cycle that have
not yet been assigned a replacement, the disjoint sets compress
the subpath by uniting the sets corresponding to each vertex and its
parent in the tree, thereby ensuring that MST edges are traversed at
most once.

The conventional union heuristic with path compression for the
disjoint set union problem would not lead us to a linear-time
algorithm.  Gabow and Tarjan \cite{GT83,GT85} designed a linear-time
algorithm for the special case where the structure of the unions,
called the \emph{Union Tree}, is known in advance. The Gabow-Tarjan
approach executes a sequence of $m$ union and find operations on $n$
elements in \bigO{m+n} time and \bigO{n} space. The functions are
$\MAKESET{v}$ that initializes $v$ into a singleton set with label
$v$, $\FIND{v}$ that returns the label of the set containing $v$, and
$\LINK{v}$ that unites the sets $v$ and $P[v]$, where $P[v]$ is the
parent of $v$ in the union tree, and gives it the label of the set
containing $P[v]$.

For the path compression used in our MST replacement edge algorithm,
the structure of unions is known in advance; that is, the union tree
is equivalent to the MST. Hence, we use the Gabow-Tarjan approach for
the disjoint sets and path compression.

There are cases when the algorithm may terminate prior to scanning the
entire list of edges. This observation leads to a faster
implementation that still runs in linear time.  A \emph{bridge} edge
of a connected graph is defined as an edge whose removal disconnects
the graph. Clearly, bridge edges will always be included in the MST
and will not have a replacement edge in the solution. Tarjan
\cite{Tar74} shows that counting the number of bridges in the graph
$G$ takes \bigO{m+n} time.  Thus, Algorithm~\ref{alg} may terminate
the scanning of remaining edges once $n-1-k$ replacement edges are
identified, where $k$ is the number of bridges in $G$.


\subsection{Example}

In this section we give a simple walk-through of the algorithm on the
graph in Figure~\ref{exampleGraph}. This example exercises all three
plans in the algorithm.

\begin{figure}[h]
\centering
\begin{tikzpicture}
[vertex/.style={circle,draw=black,node distance=1.5,minimum size=16,inner sep=0}]
\node [vertex] (a) [label=above:{\tiny $(13,14)$}] {a};
\node [vertex] (b) [label=above:{\tiny $(9,10)$}] at ([shift=({25:2})]a) {b};
\node [vertex,style={thick},line width=2.5] (c) [label=left:{\tiny $(1,16)$}] at ([shift=({-85:1.5})]b) {c};
\node [vertex] (d) [label=below:{\tiny $(2,15)$}] at ([shift=({-65:2})]a) {d};
\node [vertex] (e) [label=above:{\tiny $(4,11)$}] at ([shift=({-30:2})]b) {e};
\node [vertex] (f) [label=below:{\tiny $(3,12)$}] at ([shift=({0:2.5})]d) {f};
\node [vertex] (g) [label=above:{\tiny $(6,7)$}] at ([shift=({5:2.85})]b) {g};
\node [vertex] (h) [label=above:{\tiny $(5,8)$}] at ([shift=({-40:2})]g) {h};
\draw [style={thick}, line width=2.5] (d) -- node [below] {1} (f);
\draw [style={thick}, line width=2.5] (e) -- node [above] {2} (h);
\draw [style={thick}, line width=2.5] (g) -- node [above] {3} (h);
\draw [style={thick}, line width=2.5] (e) -- node [left] {4} (f);
\draw [style={thick}, line width=2.5] (b) -- node [below] {5} (e);
\draw [style={thick}, line width=2.5] (c) -- node [left] {7} (d);
\draw [style={thick}, line width=2.5] (a) -- node [left] {9} (d);
\draw (a) -- node [above] {10} (b);
\draw (e) -- node [right] {6} (g);
\draw (b) -- node [above] {8} (g);
\draw (c) -- node [below] {11} (e);
\draw (b) -- node [left] {12} (c);
\draw (f) -- node [right,xshift=1ex] {13} (h);
\end{tikzpicture}

\caption{An example graph on 8 vertices $(a, \ldots, h)$ and 13 weighted edges. The MST root vertex $c$ and MST edges are highlighted by thicker lines.}
\label{exampleGraph}
\end{figure}
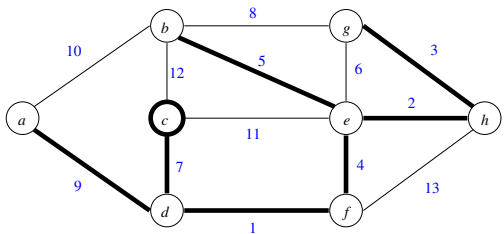

The MST edges have weights 1, 2, 3, 4, 5, 7, and 9, and say the
root of the MST tree is vertex $c$. 
The vertices in Figure~\ref{exampleGraph} have been labelled with (IN,OUT) numbers assigned by DFS over the MST edges with branching order by ascending edge weight.
In the following walk-through of
the algorithm, the reader should note that all vertices retain their
original parents. Also we remark that all walks are in order from
descendent to ancestor or from \emph{last} to \emph{first} in DFS
order.
 
Then in sorted, non-MST edge order we begin with the $\edge{g,e}$ edge
at line~\ref{alg:pathlabel1}.
\begin{enumerate}
\item Vertex $e$ is the ancestor of $g$, then at line~\ref{alg:plan}
  we get the ancestor (ANCESTOR) plan with $k_1=\IN{e}$, $k_2=\IN{g}$ and
  thus $k_1 < k_2$.
\item The cycle traversal begins with $g$
  (line~\ref{alg:startS}). Since $g$ has not yet been visited then
  line~\ref{alg:replace} assigns the current non-MST edge $\edge{g,e}$
  to $\edge{g,h}$, where $h$ is the parent of $g$.
\item The disjoint sets are linked (line~\ref{alg:link}) so $g$'s
  disjoint set gets $h$'s label. This compresses the subpath $(g,h)$.
\item The next vertex is $h$ since it is the parent of $g$
  (line~\ref{alg:findNext}) and then $k_2$ is updated to \IN{h}.
\item Continuing the traversal with $h$ (line~\ref{alg:travStart}),
  again line~\ref{alg:replace} assigns $\edge{g,e}$ to $\edge{h,e}$
  where $e$ is the parent of $h$.
\item The disjoint sets are linked (line~\ref{alg:link}) so $h$'s
  disjoint set gets $e$'s label. This compresses the subpath
  $(g,h,e)$.
\item Now the next vertex is $e$ so $k_2$ gets \IN{e} making it equal
  to $k_1$, thus ending the while loop.
\item The oppositely-oriented edge $\edge{e,g}$ input at
  line~\ref{alg:pathlabel2} is not processed because $e$ is the
  ancestor of $g$ and we have already followed the path from
  descendent to ancestor.  \setcounter{saveenum}{\value{enumi}}
\end{enumerate}
 
The next non-MST edge is $\edge{b,g}$ and input at
line~\ref{alg:pathlabel1}, since it happens that $g$ was reached before $b$ in the DFS.
\begin{enumerate}
  \setcounter{enumi}{\value{saveenum}}
\item We get the RIGHT branch plan with $k_1=\OUT{g}$, $k_2=\IN{b}$
  and so again $k_1 < k_2$.
\item The traversal begins with $b$ (line~\ref{alg:startS}) and since
  $b$ has not yet been visited then $\edge{b,e}$ gets the non-MST edge
  $\edge{b,g}$, where $e$ is the parent of $b$.
\item The disjoint set is linked (line~\ref{alg:link}) so $b$'s
  disjoint set gets $e$'s label and the subpath $(b,e)$ is compressed.
\item The next vertex is $e$ (the parent of $b$) and thus $k_2$ is
  updated to \IN{e} (lines~\ref{alg:caseRight1}-\ref{alg:caseRight2})
  making $k_2 < k_1$ and thus ending the while loop.
\item The oppositely-oriented edge $\edge{g,b}$ is input at
  line~\ref{alg:pathlabel2}.
\item We get the LEFT branch plan (line~\ref{alg:planLeft}) with
  $k_1=\OUT{g}$, $k_2=\IN{b}$ so $k_1 < k_2$ and start the traversal
  with $g$.
\item Now observe that the disjoint sets had previously compressed the
  subpath $(g,h,e)$. Thus $\FIND{g} \ne g$. This jumps the walk to the
  LCA, which is vertex $e$, and updates $k_1$ to \OUT{e} to end the
  while loop.
\end{enumerate}

Observe for edge $\edge{b,g}$ that if $b$ were reached before $g$ in
the DFS, it would have finished earlier but all subpaths would have
been compressed as before. We leave it as an exercise for the reader
to finish the algorithm on the remaining non-MST edges.

\subsection{Proof of correctness}

\begin{claim}
The lowest weight non-MST edge that induces a cycle containing an MST
edge $e$ is the replacement for $e$. This follows from the 
\emph{Cut Property}~\cite[c.f. Theorem~23.1]{CLRS09}.
\label{claim1}
\end{claim}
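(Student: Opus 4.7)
The plan is to invoke the Cut Property by turning the statement about fundamental cycles into a statement about a specific cut of $G$. Let $e = \edge{u,v}$ be an MST edge. Removing $e$ from $T$ splits $T$ into two subtrees $T_1$ and $T_2$ containing $u$ and $v$ respectively, with vertex sets $S$ and $V \setminus S$. This partition defines a cut $(S, V \setminus S)$ of $G$, and any spanning tree of $G$ that contains $T \setminus \{e\}$ must include exactly one edge crossing this cut to reconnect $T_1$ and $T_2$.

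Next I would establish the equivalence between ``non-MST edges crossing the cut $(S, V\setminus S)$'' and ``non-MST edges whose fundamental cycle in $T$ contains $e$.'' For any non-tree edge $f = \edge{x,y}$, the fundamental cycle of $f$ with respect to $T$ is the unique cycle formed by $f$ together with the unique $x$--$y$ path in $T$. This path contains $e$ if and only if $x$ and $y$ lie in different subtrees $T_1$ and $T_2$, i.e., $f$ crosses the cut. So the set of candidates for a replacement edge for $e$ is exactly the set of non-MST edges whose fundamental cycle contains $e$.

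Now I would apply the Cut Property: among all edges crossing $(S, V \setminus S)$, the one of minimum weight belongs to some MST of $G$. Since $e$ itself lies in the cut and was chosen for the MST $T$, the minimum-weight edge from $E \setminus E_T$ that crosses the cut is the lightest valid replacement — adding it to $T \setminus \{e\}$ yields a spanning tree whose total weight is minimized among all spanning trees containing $T \setminus \{e\}$. Combining the previous step, this lightest crossing non-MST edge is precisely the lightest non-MST edge whose fundamental cycle contains $e$, which is what the claim asserts.

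The main obstacle, modest as it is, is the fundamental-cycle/cut equivalence in the second paragraph; everything else is a direct appeal to the Cut Property. One subtlety to mention is the tie-breaking convention: if several non-MST edges crossing the cut share the minimum weight, any of them is a valid replacement, so ``the'' lightest should be read up to ties — consistent with how the algorithm scans edges in sorted order and assigns the first encountered.
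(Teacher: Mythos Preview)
Your proposal is correct and follows the same line of reasoning the paper invokes, namely the Cut Property. In fact, the paper does not supply a separate proof for this claim at all; it simply appends ``This follows from the Cut Property'' to the statement and moves on. Your write-up is a careful expansion of that one-line justification: you make explicit the cut $(S, V\setminus S)$ obtained by deleting $e$ from $T$, you spell out the equivalence between non-tree edges crossing that cut and non-tree edges whose fundamental cycle contains $e$, and you then appeal to the Cut Property. None of this differs in substance from the paper's intent; it is the standard argument, just made rigorous where the paper leaves the details implicit. Your remark on ties is also appropriate and consistent with how the algorithm processes the sorted edge list.
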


\begin{claim}
Algorithm~\ref{alg} traverses the cycle induced by a non-MST edge from
descendent to ancestor, and stops at the LCA (in the case that the LCA
is different from $s$ and $t$).
\label{claim2}
\end{claim}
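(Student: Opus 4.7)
The plan is to prove the claim separately for each of the three plans (ANC, LEFT, RIGHT), relying throughout on the fundamental DFS-interval property: a vertex $u$ is an ancestor of $w$ in the rooted MST iff $[\IN{w}, \OUT{w}] \subseteq [\IN{u}, \OUT{u}]$; otherwise the two intervals are disjoint. I would first argue that the walk is monotonically upward: $v$ starts at $s$, and every reassignment $v \gets \FIND{v}$ yields a strict ancestor of the previous $v$. Indeed, in any iteration we either skipped \LINK{} (because $\FIND{v}\neq v$, meaning \FIND{} already returned an ancestor of $v$) or performed \LINK{v}, after which $\FIND{v}$ equals the root of $P[v]$'s set, which is an ancestor of $P[v]$ and hence of $v$.

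For the ANC plan (where $z = t$), $k_1 = \IN{t}$ is constant and $k_2$ is reset to $\IN{v}$ at the end of each iteration. While $v$ is a proper descendant of $t$, the interval property gives $\IN{v} > \IN{t}$, keeping $k_1 < k_2$; once $v$ reaches $t$, $k_2 = \IN{t} = k_1$ and the loop halts. For the LEFT plan (with $s$ visited before $t$, so $\IN{z} < \IN{s} < \OUT{s} < \IN{t} < \OUT{t} < \OUT{z}$), $k_2 = \IN{t}$ is fixed and $k_1 \gets \OUT{v}$ each iteration. Any proper descendant $v$ of $z$ on the path from $s$ upward is not an ancestor of $t$, so disjointness forces $\OUT{v} < \IN{t}$; at $v=z$, $\OUT{v} > \IN{t}$ and the loop exits. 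The RIGHT plan is the mirror image: $k_1 = \OUT{t}$ is fixed and $k_2 \gets \IN{v}$. For $v$ strictly between $s$ and $z$, $t$'s subtree was fully explored before $v$'s subtree began, so $\OUT{t} < \IN{v}$; at $v=z$, $\IN{v} \le \IN{z} < \IN{t} < \OUT{t}$, and again the loop terminates.

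The main obstacle is that path compression from earlier non-tree edges can cause $\FIND{v}$ to leapfrog strictly above $z$ rather than land on it exactly, so the phrase ``stops at the LCA'' requires care. I would close this gap by observing that in each plan the termination test actually fails for \emph{every} vertex at or above $z$: for ANC, $\IN{v} \le \IN{t}$; for LEFT, $\OUT{v} \ge \OUT{z} > \IN{t}$; for RIGHT, $\IN{v} \le \IN{z} < \IN{t} < \OUT{t}$. Hence, whether $v$ lands exactly on $z$ or overshoots it by compression, the \textbf{while} condition is violated before the loop body executes, so no replacement is assigned to any edge outside the cycle. Combined with the monotone upward traversal from $s$, this establishes Claim~\ref{claim2}.
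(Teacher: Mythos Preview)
Your proof is correct and follows the same overall scheme as the paper---case analysis on the three plans, together with the observation that the walk moves strictly upward via \textbf{find}/\textbf{link}---but it is considerably more careful than the paper's own argument. The paper's proof of Claim~\ref{claim2} essentially asserts that the traversal ``leads to $t$'' or ``up to the LCA'' without ever analyzing the while-loop guard $k_1 < k_2$; the interval bookkeeping you supply (e.g., $\OUT{v} < \IN{t}$ for LEFT while $v$ is strictly below $z$, and $\OUT{v} > \IN{t}$ once $v$ is at or above $z$) is exactly what is needed to justify termination and is only gestured at later, in the paper's proof of Claim~\ref{claim3}.

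The one point on which you genuinely go beyond the paper is the overshoot issue: because earlier \textbf{link} operations may have merged $z$ into a set rooted strictly above it, $\FIND{v}$ can jump past $z$ rather than land on it. The paper's text (``stops at the LCA'') does not acknowledge this possibility, whereas you show that the guard fails for \emph{every} ancestor of $z$, so no edge outside the cycle ever receives a replacement in line~\ref{alg:replace}. This is the right way to make the claim precise, and it is what the correctness of the algorithm actually requires.
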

\begin{proof}
Observe that the parent is set for each vertex in DFS order so that
the traversal carried out by lines
\ref{alg:travStart}--\ref{alg:travEnd} follows a single path from
descendent to ancestor. The path is an upwards traversal of the
compressed subpaths in the disjoint sets. For each $\edge{s, t}$ edge,
$s$ may be the ancestor of $t$ or vice versa, or the LCA is neither
$s$ nor $t$. The lines \ref{alg:planStart}--\ref{alg:planEnd} always
set the starting vertex in the traversal of the cycle so that it
proceeds from descendent to ancestor as follows.

If $s$ is the ancestor of $t$ then no traversal is made because
line~\ref{alg:sAncestor} returns.  If $t$ is the ancestor of $s$, then
the traversal begins with $s$ at line~\ref{alg:startS} and each
traversal up using the disjoint sets leads to $t$.  Otherwise, there
is an LCA and from lines \ref{alg:pathlabel1}--\ref{alg:pathlabel2}
each branch is traversed from $s$ and $t$ up to the LCA.  The subpath
compression using disjoint sets occurs at line~\ref{alg:link}.  The
linking unites all sets corresponding to vertices in the tree
traversal from $s$ and $t$ up to the LCA.
\end{proof}

\begin{claim}
Algorithm~\ref{alg} traverses only those edges in the unique cycle
induced by a given non-MST edge.
\label{claim3}
\end{claim}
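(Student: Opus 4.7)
The plan is to argue, by case analysis on \textsc{plan}, that the while loop on lines~\ref{alg:travStart}--\ref{alg:travEnd} terminates exactly when the current vertex $v$ reaches $z = \mathrm{LCA}(s,t)$, so that the only tree edges ever written at line~\ref{alg:replace} during this call lie on the path from the starting vertex up to $z$, which together with the non-tree edge itself is the fundamental cycle of $\edge{s,t}$.

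First I would invoke Claim~\ref{claim2} to get that at every iteration $v$ is an MST ancestor of the starting endpoint, so the edge written on line~\ref{alg:replace} is $\edge{v,P[v]}$ and this edge is automatically a cycle edge as long as $v$ is strictly below $z$. Then I would restate the DFS-interval characterization already used to justify the choice of PLAN. For a vertex $w$ on the path from $s$ to the root, $w$ lies strictly below $z$ iff $\OUT{w}<\IN{t}$; symmetrically for $t$; and when $t$ is an ancestor of $s$, $w$ lies strictly below $t$ iff $\IN{t}<\IN{w}$. Each PLAN is designed so that its bound update ($k_1\gets\OUT{v}$ for LEFT, $k_2\gets\IN{v}$ for ANC and RIGHT) makes the loop condition $k_1<k_2$ equivalent to the assertion ``$v$ is strictly below $z$'' (or below $t$, for ANC). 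Consequently, the iteration in which $v$ first becomes $z$ falsifies $k_1<k_2$ on the next test and the loop exits, so the edge $\edge{z,P[z]}$, which is outside the cycle, is never written.

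The subtle point, which I expect to be the main obstacle, is what happens on a compressed jump $v\gets\FIND{v}$ that skips past one or more already-linked tree edges. Because \LINK{\cdot} only unites a vertex with its MST parent, the Gabow--Tarjan union tree coincides with the MST, and \FIND{v} is always an MST ancestor of $v$; so the jump stays on the same root path. If it lands strictly below $z$, the argument above continues unchanged; if it lands at $z$ or above, the immediately following bound update violates $k_1<k_2$ and the loop terminates without further work. Moreover, line~\ref{alg:replace} executes only when $\FIND{v}=v$, so no tree edge outside the cycle can be mis-assigned on a jump: the skipped edges received their replacements during earlier (lighter) non-MST edges' calls, and the current call merely steps over them. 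Combining the three PLAN cases with this compression argument shows that every tree edge the call touches belongs to the fundamental cycle induced by $\edge{s,t}$.
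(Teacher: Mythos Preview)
Your proof is correct and arrives at the same conclusion as the paper, but the packaging differs. The paper phrases the argument as a loop invariant (``$v$ at the start of the while loop is a vertex of the cycle induced by $\edge{s,t}$''), dispatches the base case trivially, and then leans almost entirely on Claim~\ref{claim2} for both the inductive step (each new $v$ is a predecessor on the descendant-to-ancestor path) and the stopping point (the traversal cannot go above the LCA), with only a brief remark that the $k_1,k_2$ updates in lines~\ref{alg:plan2Start}--\ref{alg:plan2End} force termination. You instead carry out an explicit case analysis on PLAN, showing for each branch that after the bound update the test $k_1<k_2$ is \emph{equivalent} to ``current $v$ is strictly below $z$,'' and you treat the compressed \textbf{find} jump separately: since the union tree is the MST, $\FIND{v}$ is always an MST ancestor of $v$, so the walk stays on the root path; and since line~\ref{alg:replace} fires only when $\FIND{v}=v$, an overshoot to $z$ or above cannot mis-assign a replacement. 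Your version is more self-contained and makes the overshoot case explicit, which the paper leaves buried inside Claim~\ref{claim2}; the paper's version is shorter but correspondingly relies more on that earlier claim.
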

\begin{proof}
We prove this using a loop invariant for a single cycle. Let $\edge{s, t}$
be a non-MST edge and denote the cycle it induces by $s,v_i, v_{i+1},
\ldots ,t,s$.

The loop invariant is: $v$ at the start of the while loop at
lines \ref{alg:travStart}--\ref{alg:travEnd} must be a vertex in the
cycle induced by $\edge{s, t}$.

The base step holds trivially since the starting vertex is $s$.

The inductive step maintains the loop invariant as follows. At each
iteration the disjoint sets of each vertex and its parent are united
and by Claim~\ref{claim2} this vertex must be a predecessor in the
path from descendent to ancestor. Thus every iteration produces the
sequence $v_i, v_{i+1}, \ldots, v_p$ where $v_p$ is either $t$ or an
LCA of $s$ and $t$.  By Claim~\ref{claim2}, the traversal cannot go
above the LCA of $s$ and $t$.

Termination of the loop is determined by new values for either $k_1$
or $k_2$ between lines \ref{alg:plan2Start}--\ref{alg:plan2End}. If
the case was that $t$ was the ancestor of $s$, then $k_2$ decreases in
value as the path traversal using disjoint sets approaches
$t$. Otherwise the LCA is neither $s$ nor $t$ and if $s$ is visited
before $t$ in DFS order, then it is in the \emph{left} branch and
$k_1$ increases in value as the upwards path traversal using disjoint
sets approaches $t$, otherwise we have the \emph{right} branch and
similarly the loop ends as the path traversal using disjoint sets
moves towards the other endpoint.
\end{proof}

\begin{thm}
Given the Minimum Spanning Tree for an undirected, weighted graph $G =
(V, E)$, and non-tree edges sorted by weight, then Algorithm~\ref{alg}
correctly finds all minimum cost replacement edges in the Minimum
Spanning Tree of $G$.
\end{thm}
\begin{proof}
First observe that all non-MST edges are processed in ascending order
by weight between lines \ref{alg:scanStart}--\ref{alg:scanEnd}. Then
the $\edge{s, t}$ edge that induces the first cycle to contain an MST
edge must be the replacement edge for that MST edge following
Claim~\ref{claim1} and the order of processing. This is carried out by
line~\ref{alg:replace}, hence each MST edge gets the first non-MST
edge that induces a cycle containing it.

It follows from Claim~\ref{claim3} and the loop over all non-MST edges
at lines~\ref{alg:scanStart}--\ref{alg:scanEnd} that all MST edges in
a cycle will get a replacement edge.

At the end of a cycle, the traversed edges in the subpath are
compressed with each parent set by linking the disjoint sets so that
any edge from this cycle cannot be traversed again because it has been
assigned a replacement edge.
\end{proof}

\subsection{Complexity analysis}

\begin{claim}
Algorithm~\ref{alg} updates disjoint sets in \bigO{m+n} time and \bigO{n} space. 
\label{claim4}
\end{claim}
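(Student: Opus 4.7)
The plan is to reduce the claim to the Gabow--Tarjan linear-time bound of \cite{GT83,GT85}, which already gives $O(m+n)$ time and $O(n)$ space for any sequence of $m$ \textbf{makeset}, \textbf{find}, and \textbf{link} operations on $n$ elements when the union tree is fixed in advance. The union tree here is exactly the rooted MST: each call $\LINK{v}$ in Algorithm~\ref{alg} unites $v$ with its parent $P[v]$, so Gabow--Tarjan applies directly. It therefore suffices to bound the total number of disjoint set operations issued by Algorithm~\ref{alg} by $O(m+n)$.

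Two of the three operation counts are immediate. First, \textbf{makeset} is called exactly $n$ times in the initialization loop at lines~\ref{alg:makeset1}--\ref{alg:makeset2}. Second, \textbf{link} is invoked at most $n-1$ times: line~\ref{alg:link} is guarded by the test $\FIND{v}=v$ at line~\ref{alg:findv}, and once $\LINK{v}$ has executed $v$ is absorbed into its parent's set, so this test can never again hold at $v$; hence each vertex contributes at most one \textbf{link}, matching the fact that each MST edge is assigned a replacement at most once (Claim~\ref{claim3}).

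The main obstacle is bounding the number of \textbf{find} calls, which are driven by the while loop in \textsc{PathLabel}. My plan is to show that in any single \textsc{PathLabel} call every iteration of the while loop except possibly the first performs a \textbf{link}. The key observation is that under the Gabow--Tarjan labelling scheme the representative of a set is the topmost vertex of its linked subtree in the MST; therefore line~\ref{alg:findNext}, which sets $v\gets\FIND{v}$, always leaves $v$ as the representative of its set at the end of an iteration. Consequently the guard $\FIND{v}=v$ holds at the start of every subsequent iteration, and each such iteration either performs a \textbf{link} or the loop terminates because the updated $k_1$ or $k_2$ violates the while condition. The only iteration that can skip a \textbf{link} is the very first, in case the starting vertex $s$ has already been absorbed into a compressed subpath by an earlier non-MST edge.

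Armed with this bound, the remainder is routine bookkeeping: summing over the $2(m-n+1)$ \textsc{PathLabel} calls gives at most $(n-1)+2(m-n+1)=O(m+n)$ while-loop iterations and hence $O(m+n)$ \textbf{find} invocations. Combined with the $n$ \textbf{makeset} calls and at most $n-1$ \textbf{link} calls, the total operation count is $O(m+n)$, which Gabow--Tarjan executes in $O(m+n)$ time and $O(n)$ space, establishing the claim.
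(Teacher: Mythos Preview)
Your proposal is correct and follows essentially the same approach as the paper: invoke Gabow--Tarjan with the MST as the union tree, bound \textbf{makeset} by $n$, \textbf{link} by $n-1$, and \textbf{find} by $O(m+n)$ via charging most \textbf{find}s to \textbf{link}s with $O(1)$ leftover per \textsc{PathLabel} call. Your justification for the \textbf{find} bound---observing that $v\gets\FIND{v}$ leaves $v$ a representative so every subsequent iteration must link---is in fact more explicit than the paper's, which simply asserts that apart from an initial and a final \textbf{find} per call, ``every other \textbf{find} precedes a \textbf{link}.''
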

\begin{proof}
The Gabow-Tarjan disjoint sets use \bigO{n} \textbf{makeset}
operations (lines~\ref{alg:makeset1}-\ref{alg:makeset2}), one for each vertex $v \in V$; and \bigO{n} \textbf{link}
operations (line~\ref{alg:link}) since there are at most $n-1$ replacement edges.  For each
non-tree edge, there are at most two \textbf{find} operations at the
start and end of each of the two \textsc{PathLabel} calls,
corresponding with the initial \FIND{s} (line~\ref{alg:findv}) and the
final \FIND{v} (line~\ref{alg:findNext}) that returns a label of either
$t$ or an ancestor of $t$. Hence these contribute to at most $4
(m-n+1) = \bigO{m}$ \textbf{find} operations.  Every other
\textbf{find} precedes a \textbf{link} operation, so there are \bigO{n}
of these \textbf{find} operations. Therefore, Algorithm~\ref{alg} uses
\bigO{m} \textbf{find} operations. 

The union tree is equivalent to
the MST tree. Hence, Algorithm~\ref{alg} uses the special case of
disjoint set union when the union tree is known in advance.  Using the
Gabow-Tarjan disjoint set union, thus, takes \bigO{m+n} time and
\bigO{n} space.
\end{proof}

\begin{thm}
Given the Minimum Spanning Tree for an undirected, weighted graph $G =
(V, E)$, and non-tree edges sorted by weight, then Algorithm~\ref{alg}
finds all minimum cost replacement edges of the Minimum Spanning Tree
of $G$ in \bigO{m+n} time and \bigO{m+n} space.
\end{thm}
\begin{proof}
Let $T$ be the Minimum Spanning Tree of $G$. Initializing all values
in the parent array $P$ takes \bigO{n} time.  Since there are $n-1$
edges in $T$ then running DFS on $T$
(line~\ref{alg:dfs}) to initialize the IN and OUT arrays takes
\bigO{n} time. Initializing the replacement edges of the MST edges
takes \bigO{n} time.

There are $m - n + 1 = \bigO{m}$ non-MST edges read in ascending order
by weight, taking \bigO{m} time.  For each non-MST edge, it was
established by Claim~\ref{claim3} that the algorithm can only
reference edges in the fundamental cycle induced by that non-MST
edge. These edges are traversed only once as follows.

The algorithm walks each fundamental cycle in the same direction from
descendant to ancestor, as imposed by the DFS ordering set in the IN
and OUT arrays.  On visiting a vertex $v$, if $v$'s set label equals
$v$ then the edge $\edge{v,P[v]}$ has not been visited before,
otherwise it violates the path compression at lines
\ref{alg:findv}-\ref{alg:link}.  When an edge $\edge{v,P[v]}$ gets a
replacement (line~\ref{alg:replace}), the disjoint set corresponding
with parent $P[v]$ is united with $v$'s set (line~\ref{alg:link})
using the Gabow-Tarjan disjoint set \textbf{link} operation.  The
label of the new set is the root of the induced subtree of the MST.
Therefore when a vertex $v$ is first visited, \LINK{v} results in the
set label being the label of the set containing $P[v]$. On completing
the walk along the cycle, the set label will be the label of the set
containing the LCA. Then subsequent \FIND{v} operations return the
most recent root of the subtree containing $v$.  Thus this sequence of
disjoint set unions perform path compression on tree edges with
assigned replacements. The compressed path decreases the traversal
length of subsequent walks beginning at vertices lower in the DFS
ordering by skipping over tree edges already with assigned replacement
edges.

It follows from Claim~\ref{claim3} and this specific ordering of the
disjoint set labels that the algorithm cannot follow a path that does
not close the cycle.  Then because of path compression only \bigO{m}
edges are traversed, taking \bigO{m} time. Claim~\ref{claim4}
establishes \bigO{m+n} time and \bigO{n} space for all disjoint set
operations.  Hence it takes \bigO{m+n} time to find all replacement
edges in $T$. The data structures are simple arrays and Gabow-Tarjan
disjoint set union data structures, taking \bigO{n} space, and all
non-MST edges take \bigO{m} space. Therefore it takes \bigO{m+n} time
and \bigO{m+n} space as claimed.
\end{proof}

\section{Most Vital Edge}

The most vital edge of a connected, weighted graph $G$ is the edge
whose removal causes the largest increase in the weight of the minimum
spanning tree \cite{HJL91}.  When the graph contains bridges (which
can be found in linear time \cite{Tar74}), the most vital edge is
undefined.  The input for methods to find the most vital edge include both the graph and the edges sorted by weight. 
Hsu et al.~\cite{HJL91} designed algorithms to find the
most vital edge in \bigO{m \log m} and \bigO{n^2} time.  Iwano and
Katoh \cite{IK93} improve this with \bigO{m+n \log n} and \bigO{m
  \alpha(m,n)} time algorithms.  Suraweera et al.~\cite{SMB95} prove
that the most vital edge is in the minimum spanning tree.  Hence, once
Algorithm~\ref{alg} finds all replacement edges of the minimum
spanning tree, the most vital edge takes \bigO{n} time by simply
finding the tree edge with maximum difference in weight from its
replacement edge.  Thus, our approach will also find the most vital
edge in \bigO{m+n} time, and is the first linear algorithm for finding
the most vital edge of the minimum spanning tree given the non-tree
edges sorted by weight.

\bibliographystyle{plain}
\bibliography{parallel}
\end{document}